\documentclass{sig-alternate}
\pdfoutput=1
\usepackage[ruled,vlined,linesnumbered]{algorithm2e}
\usepackage{amsmath}
\usepackage{amssymb}
\usepackage{tikz}

\newcommand{\ourtitle}{An in-place truncated Fourier transform and applications to polynomial multiplication}
\newcommand{\ourkeywords}{Truncated Fourier transform, fast Fourier transform, polynomial multiplication, in-place algorithms}

\usepackage[letterpaper,a4paper=false,pdftex,%
  colorlinks=false,bookmarks=true,bookmarksnumbered=true,pdfborder={0 0 0},%
  pdftitle={\ourtitle},pdfkeywords={\ourkeywords},%
  pdfauthor={David Harvey, Daniel S. Roche}]{hyperref}

\toappear{Copyright (C) 2010, David Harvey and Daniel S. Roche}

\title{\ourtitle}

\numberofauthors{2}
\author{
\alignauthor
David Harvey\\
  \affaddr{Courant Institute of Mathematical Sciences}\\
  \affaddr{New York University}\\
  \affaddr{New York, New York, U.S.A.}\\
  \email{dmharvey@cims.nyu.edu}
  \email{\href{http://www.cims.nyu.edu/~harvey/}{www.cims.nyu.edu/\~{}harvey/}}
\alignauthor
Daniel S. Roche\\
  \affaddr{Cheriton School of Computer Science}\\
  \affaddr{University of Waterloo}\\
  \affaddr{Waterloo, Ontario, Canada}\\
  \email{droche@cs.uwaterloo.ca}
  \email{\href{http://www.cs.uwaterloo.ca/~droche/}{www.cs.uwaterloo.ca/\~{}droche/}}
}

\newtheorem{thm}{Theorem}[section]
\newtheorem{prop}[thm]{Proposition}
\newtheorem{lem}[thm]{Lemma}
\newdef{remark}{Remark}

\newcommand{\assign}{\leftarrow}
\newcommand{\divides}{\mathbin{|}}
\DeclareMathOperator{\TFT}{InplaceTFT}
\DeclareMathOperator{\FFT}{FFT}
\DeclareMathOperator{\ITFT}{InplaceITFT}
\DeclareMathOperator{\rev}{rev}
\DeclareMathOperator{\len}{len}
\DeclareMathOperator{\fl}{LeftmostLeaf}
\DeclareMathOperator{\rp}{RightmostParent}
\DeclareMathOperator{\il}{IsLeaf}
\DeclareMathOperator{\even}{Even}
\DeclareMathOperator{\odd}{Odd}
\DeclareMathOperator{\parent}{Parent}
\newcommand{\sqb}[1]{[#1]}

\SetKwInOut{Output}{Output}
\SetKw{KwHalt}{halt}
\SetKw{KwOr}{or}
\SetKw{KwTrue}{true}

\begin{document}

\maketitle

\begin{abstract}
The truncated Fourier transform (TFT) was introduced by van der Hoeven
in 2004 as a means of smoothing the ``jumps'' in running time of the ordinary FFT algorithm that occur at power-of-two input sizes. However, the TFT still introduces these jumps in memory usage.
We describe in-place variants of the forward and inverse TFT algorithms, achieving time complexity $O(n \log n)$ with only $O(1)$ auxiliary space. As an application, we extend the second author's results on space-restricted FFT-based polynomial multiplication to polynomials of arbitrary degree.
\end{abstract}

\category{F.2.1}{Analysis of Algorithms and Problem Complexity}%
{Numerical Algorithms and Problems}%
[Computations on polynomials]
\category{G.4}{Mathematical Software}%
{Algorithm design and analysis, Efficiency}
\category{I.1.2}{Symbolic and Algebraic Manipulation}{Algorithms}%
[Algebraic algorithms, Analysis of algorithms]

\terms{Algorithms, Performance, Theory}
\keywords{\ourkeywords}

\section{Introduction}
\subsection{Background}
The discrete Fourier transform (DFT) is a linear map that evaluates a given polynomial at powers of a root of unity. Cooley and Tukey \cite{CooTuk65} were the first to develop an
efficient method to compute this transform on a digital computer, 
known as the Fast Fourier
Transform (FFT). This algorithm has since become one of the most important and
useful tools in computer science, especially in the area of signal
processing.

The FFT algorithm is also important in computer algebra, most
notably in asymptotically fast methods for integer and polynomial
multiplication. The first integer multiplication algorithm to run in
softly linear time
relies on the FFT \cite{SchStr71}, as do the recent theoretical
improvement \cite{Fur07} and the best result for polynomial
multiplication over arbitrary algebras \cite{CanKal91}.
Moreover, numerous other operations on polynomials --- including division,
evaluation/interpolation, and GCD computation --- have been reduced to
multiplication, so more efficient multiplication methods have an
indirect effect on many areas in computer algebra 
\cite[\S 8--11]{vzgGer03}.

The simplest FFT to implement, and the fastest in practice, is
the radix-2 Cooley-Tukey FFT. Because the radix-2 FFT requires the size to be a power of two, 
the simplest solution for all other sizes is to pad the input polynomials
with zeros, resulting in large unwanted
``jumps'' in the complexity at powers of two.

\subsection{The truncated Fourier transform}

It has been known for some time that if only a subset of the DFT output is needed, then the FFT can be truncated or ``pruned'' to reduce the complexity, essentially by disregarding those parts of the computation tree not contributing to the desired outputs \cite{Mar71,SorBur93}. More recently, van der Hoeven took the crucial step of showing how to invert this process, describing a truncated Fourier transform (TFT) and an \emph{inverse} truncated Fourier transform (ITFT), and showing that this leads to a polynomial multiplication algorithm whose running time varies relatively smoothly in the input size \cite{vdh-1,vdh-2}. 

Specifically, given an input vector of length $n \leq 2^k$, the TFT computes the first $n$ coefficients of the ordinary Fourier transform of length $2^k$, and the ITFT computes the inverse of this map. The running time of these algorithms smoothly interpolates the $O(n \log n)$ complexity of the standard radix-$2$ Cooley--Tukey FFT algorithm. One can therefore deduce an asymptotically fast polynomial multiplication algorithm that avoids the characteristic ``jumps'' in running time exhibited by traditional FFT-based polynomial multiplication algorithms when the output degree crosses a power-of-two boundary. This observation has been confirmed with practical implementations \cite{vdh-2,LiMazSch09,cache-trunc-fft}, with the most marked improvements in the multivariate case.

One drawback of van der Hoeven's algorithms is that while their time complexity varies smoothly with $n$, their space complexity does not. Both the TFT and ITFT operate in a buffer of length $2^{\lceil \lg n \rceil}$; that is, for inputs of length $n$, they require auxiliary storage of $2^{\lceil \lg n \rceil} - n + O(1)$ cells to store intermediate results, which can be $\Omega(n)$ in the worst case. 

\subsection{Summary of results}

The main results of this paper are TFT and ITFT algorithms that require only $O(1)$ auxiliary space, while respecting the $O(n \log n)$ time bound.

The new algorithms have their origin in a cache-friendly variant of the TFT and ITFT given by the first author \cite{cache-trunc-fft}, that builds on Bailey's cache-friendly adaptation of the ordinary FFT \cite{bailey}. If the transform takes place in a buffer of length $L = 2^\ell$, these algorithms decompose the transform into $L_1 = 2^{\ell_1}$ row transforms of length $L_2 = 2^{\ell_2}$ and $L_2$ column transforms of length $L_1$, where $\ell_1 + \ell_2 = \ell$. Van der Hoeven's algorithms correspond to the case $L_1 = 2$ and $L_2 = L/2$. To achieve optimal locality, \cite{cache-trunc-fft} suggests taking $L_i \approx \sqrt L$ ($\ell_i \approx \ell/2$). In fact, in this case one already obtains TFT and ITFT algorithms needing only $O(\sqrt n)$ auxiliary space. At the other extreme we may take $L_1 = L/2$ and $L_2 = 2$, obtaining TFT and ITFT algorithms that use only $O(1)$ space at each recursion level, or $O(\log n)$ auxiliary space altogether. In signal processing language, these may be regarded as decimation-in-time variants of van der Hoeven's decimation-in-frequency algorithms.

Due to data dependencies in the $O(\log n)$-space algorithms sketched above, the space usage cannot be reduced further by simply reordering the arithmetic operations. In this paper, we show that with a little extra work, increasing the implied constant in the $O(n \log n)$ running time bound, it is possible to reduce the auxiliary space to only $O(1)$. To make the $O(1)$ space bound totally explicit, we present our TFT and ITFT algorithms (Algorithms \ref{algo:tft} and \ref{algo:itft}) in an iterative fashion, with no recursion. Since we do not have space to store all the necessary roots of unity, we explicitly include steps to compute them on the fly; this is non-trivial because the decimation-in-time approach requires indexing the roots in bit-reversed order.

As an application, we generalize the second author's space-restricted polynomial multiplication algorithm \cite{roche}. Consider a model in which the input polynomials are considered read-only, but the output buffer may be read from and written to multiple times. The second author showed that in such a model, it is possible to multiply polynomials of degree $n = 2^k - 1$ in time $O(n \log n)$ using only $O(1)$ auxiliary space. Using the new in-place ITFT, we generalize this result to polynomials of arbitrary degree.

\section{Preliminaries}

\subsection{Computational model}

We work over a ring $R$ containing $2^k$-th roots of unity for all (suitably large) $k$, and in which $2$ is not a zero-divisor. 

Our memory model is similar to that used in the study of in-place algorithms for sorting and geometric problems, combined with the well-studied notion of algebraic complexity. Specifically, we allow two primitive types in memory: ring elements and pointers. A ring element is any single element of $R$, and the input to any algorithm will consist of $n$ such elements stored in an array. A pointer can hold a single integer $a\in\mathbb{Z}$ in the range $-cn \leq a \leq cn$ for some fixed constant $c\in\mathbb{N}$. (In our algorithms, we could take $c=2$.)

We say an algorithm is in-place if it overwrites its input buffer with
the output. In this case, any element in this single input/output array
may be read from or written to in constant time. Our in-place truncated
Fourier transform algorithms (Algorithms \ref{algo:tft} and
\ref{algo:itft}) fall under this model.

An out-of-place algorithm uses separate memory locations for input and
output. Here, any element from the input array may be read from in
constant time (but not overwritten), and any element in the output array
may be read from or written to in constant time as well. This will be
the situation in our multiplication algorithm (Algorithm
\ref{algo:multiply}).

The algorithms also need to store some number of pointers and ring
elements not in the input or output arrays, which we define to be the
auxiliary storage used by the algorithm. All the algorithms we present
will use only $O(1)$ auxiliary storage space.

This model should correspond well with practice, at least when the computations are performed in main memory and the ring $R$ is finite.

\subsection{DFT notation}

We denote by $\omega_{[k]}$ a primitive $2^k$-th root unity, and we assume that these are chosen compatibly, so that $\omega_{[k+1]}^2 = \omega_{[k]}$ for all $k \geq 0$. Define a sequence of roots $\omega_0, \omega_1, \ldots$ by $\omega_s = \omega_{[k]}^{\rev_k s}$, where $k \geq \lceil \lg (s+1) \rceil$ and $\rev_k s$ denotes the length-$k$ bit-reversal of $s$. Thus we have
\begin{align*}
 \omega_0 & = \omega_{[0]} \text{ ($= 1$)}  &
 \omega_2 & = \omega_{[2]} &
 \omega_4 & = \omega_{[3]} &
 \omega_6 & = \omega_{[3]}^3 \\
 \omega_1 & = \omega_{[1]} \text{ ($= -1$)} &
 \omega_3 & = \omega_{[2]}^3 &
 \omega_5 & = \omega_{[3]}^5 &
 \omega_7 & = \omega_{[3]}^7
\end{align*}
and so on. Note that
 \[ \omega_{2s+1} = -\omega_{2s} \qquad \text{and} \qquad \omega_{2s}^2 = \omega_{2s+1}^2 = \omega_s. \]

If $F \in R[x]$ is a polynomial with $\deg F < n$, we write $F_s$ for the coefficient of $x^s$ in $F$, and we define the Fourier transform $\hat F$ by
 \[ \hat F_s = F(\omega_s). \]
In Algorithms \ref{algo:tft} and \ref{algo:itft} below, we decompose $F$ as
 \[ F(x) = G(x^2) + x H(x^2), \]
where $\deg G < \lceil n/2 \rceil$ and $\deg H < \lfloor n/2 \rfloor$. Using the properties of $\omega_s$ mentioned above, we obtain the ``butterfly'' relations
\begin{equation}
\label{eq:butterfly}
\begin{aligned}
 \hat F_{2s} & = \hat G_s + \omega_{2s} \hat H_s, \\
 \hat F_{2s+1} & = \hat G_s - \omega_{2s} \hat H_s.
\end{aligned}
\end{equation}

Both the TFT and ITFT algorithm require, at each recursive level, iterating
through a set of index-root pairs such as $\{(i,\omega_i), 0 \leq i < n\}$. A
traditional, time-efficient approach would be to precompute all powers of
$\omega_{[k]}$, store them in reverted-binary order, and then pass
through this array with a single pointer. However, this is impossible under the restriction that no auxiliary
storage space be used. Instead, we will compute the roots on-the-fly by
iterating through the powers of $\omega_{[k]}$ in order, and through the
indices $i$ in bit-reversed order. Observe that incrementing an integer
counter through $\rev_k 0, \rev_k 1, \rev_k 2, \ldots$ can be done in exactly
the same way as incrementing through $0,1,2,\ldots$, which is possible
in-place and in amortized constant time.

\section{Space-restricted TFT}
\label{sec:tft}

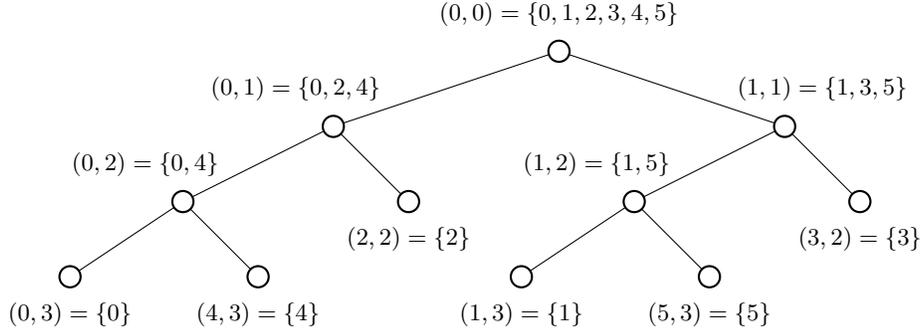
\begin{figure*}
\begin{center}
\begin{tikzpicture}[scale=1.0]
\draw (0,0) -- (-3,-1);
\draw (0,0) -- (3,-1);
\draw (-3,-1) -- (-5,-2);
\draw (-3,-1) -- (-2,-2);
\draw (3,-1) -- (1,-2);
\draw (3,-1) -- (4,-2);
\draw (-5,-2) -- (-6.5,-3);
\draw (-5,-2) -- (-4,-3);
\draw (1,-2) -- (-0.5,-3);
\draw (1,-2) -- (2,-3);
\draw[thick,fill=white] (0,0) circle (4pt);
\draw (0,0.5) node {$(0,0) = \{0,1,2,3,4,5\}$};
\draw[thick,fill=white] (-3,-1) circle (4pt);
\draw (-3.5,-0.5) node {$(0,1) = \{0,2,4\}$};
\draw[thick,fill=white] (3,-1) circle (4pt);
\draw (3.5,-0.5) node {$(1,1) = \{1,3,5\}$};
\draw[thick,fill=white] (-5,-2) circle (4pt);
\draw (-5.5,-1.5) node {$(0,2) = \{0,4\}$};
\draw[thick,fill=white] (-2,-2) circle (4pt);
\draw (-2,-2.5) node {$(2,2) = \{2\}$};
\draw[thick,fill=white] (1,-2) circle (4pt);
\draw (0.5,-1.5) node {$(1,2) = \{1,5\}$};
\draw[thick,fill=white] (4,-2) circle (4pt);
\draw (4,-2.5) node {$(3,2) = \{3\}$};
\draw[thick,fill=white] (-6.5,-3) circle (4pt);
\draw (-6.5,-3.5) node {$(0,3) = \{0\}$};
\draw[thick,fill=white] (-4,-3) circle (4pt);
\draw (-4,-3.5) node {$(4,3) = \{4\}$};
\draw[thick,fill=white] (-0.5,-3) circle (4pt);
\draw (-0.5,-3.5) node {$(1,3) = \{1\}$};
\draw[thick,fill=white] (2,-3) circle (4pt);
\draw (2,-3.5) node {$(5,3) = \{5\}$};
\end{tikzpicture}
\end{center}
\caption{TFT tree for $n = 6$}
\label{fig:tree}
\end{figure*}

In this section we describe an in-place TFT algorithm that uses only $O(1)$
auxiliary space (Algorithm \ref{algo:tft}). The routine operates on a
buffer $X_0, \ldots, X_{n-1}$ containing elements of $R$. It takes as
input a root of unity of sufficiently high order and the coefficients $F_0, \ldots, F_{n-1}$ of a polynomial $F \in R[x]$, and overwrites these with $\hat F_0, \ldots, \hat F_{n-1}$.

The pattern of the algorithm is recursive, but we avoid recursion by explicitly moving through the recursion tree, avoiding unnecessary space usage. An example tree for $n = 6$ is shown in Figure \ref{fig:tree}. The node $S = (q, r)$ represents the subarray with offset $q$ and stride $2^r$; the $i$th element in this subarray is
$S_i = X_{q+i\cdot 2^r}$, and the length of the subarray is given by
$$\len(S) = \left\lceil \frac{n-q}{2^r} \right\rceil.$$

The root is $(0, 0)$, corresponding to the entire input array of length $n$. Each subarray of length $1$ corresponds to a leaf node, and we define the predicate $\il(S)$ to be true iff $\len(S)=1$. Each non-leaf
node splits into even and odd child nodes. To facilitate the path
through the tree, we define
\begin{align*}
\even(q,r) &= (q,r+1), \\
\odd(q,r) &= (q+2^r, r+1)
\end{align*}
if $(q, r)$ is not a leaf,
$$ \parent(q,r) = \begin{cases}
   (q,r-1), & q < 2^{r-1}, \\ 
   (q-2^{r-1},r-1), & q \geq 2^{r-1} \end{cases} $$
if $(q, r)$ is not the root, and for any node we define
$$ \fl(S) = \begin{cases}
   S, & \il(S), \\
   \fl(\even(S)), & \textrm{otherwise.} \end{cases}
$$

\begin{algorithm}[ht]
\DontPrintSemicolon
\KwIn{$X_i = F_i$ for $0 \leq i < n$, where $F \in R[x]$, $\deg F < n$}
\KwOut{$X_i = \hat F_i$ for $0 \leq i < n$}
\medskip
$S \assign \fl(0,0)$\;
$\textit{prev} \assign \textrm{null}$\;
\While{\KwTrue}{\label{line:tft-while}
  $m \assign \len(S)$\;
  \If{$\il(S)$ \KwOr $\textit{prev} = \odd(S)$}{
    \For{$(i,\theta) \in \{(j,\omega_{2j}):0\leq j < \lfloor m/2\rfloor\}$
    \label{line:tftbfly1}}{
      $\left[\begin{array}{c}S_{2i}\\S_{2i+1}\end{array}\right] \assign
       \left[\begin{array}{c}S_{2i} + \theta S_{2i+1} \\
       S_{2i} - \theta S_{2i+1}\end{array}\right]$\;
       \label{line:tftbfly2}
    }
    \lIf{$S = (0,0)$}{\KwHalt}\label{line:tfthalt}\;
    $\textit{prev} \assign S$\;
    $S \assign \parent(S)$\;
  }
  \ElseIf{$\textit{prev} = \even(S)$}{
    \If{$\len(S) \equiv 1 \bmod 2$ \label{line:tftodd}}{
      $v \gets \sum_{i=0}^{(m-3)/2} 
       S_{2i+1}\cdot (\omega_{(m-1)/2})^i$\label{line:tftv}\;
      $S_{m-1} \gets S_{m-1} + v \cdot \omega_{m-1}$
      \label{line:tftcorrect}\;
    }
    $\textit{prev} \assign S$\;
    $S \assign \fl(\odd(S))$ \label{line:tftgodown}\;
  }
}
\caption{$\TFT(\sqb{X_0, \ldots, X_{n-1}})$}
\label{algo:tft}
\end{algorithm}

We begin with the following lemma.
\begin{lem}\label{lem:tft}
Let $N$ be a node with $\len(N)=\ell$, 
and let \[ A(x) = \sum_{0\leq i<\ell} A_i x^i \in R[x]. \]
If $S = \fl(N)$ and $N_i = A_i$ for $0\leq i<\ell$
before some iteration of line~\ref{line:tft-while}
in Algorithm~\ref{algo:tft}, then after a finite number of steps, 
we will have $S=N$ and $N_i = \hat{A}_i$ for $0 \leq i < \ell$,
before the execution of line~\ref{line:tfthalt}.
No other array entries in $X$ are affected.
\end{lem}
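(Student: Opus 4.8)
The plan is to argue by induction on $\ell = \len(N)$, following the iterative algorithm as it performs a post-order traversal of the subtree rooted at $N$. Informally, starting from the leftmost leaf $\fl(N)$ the algorithm transforms the entire even subtree, climbs back to $N$, descends into and transforms the odd subtree, climbs back to $N$ a second time, and only then executes the top-level butterflies that merge the two halves. The pointer $\textit{prev}$ is what lets the loop distinguish these three visits to $N$, so tracking its value at each re-entry to line~\ref{line:tft-while} will be the backbone of the argument.

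For the base case $\ell = 1$, the node is a leaf, so $\fl(N) = N = S$ already; the guard $\il(S)$ sends us into the first branch, whose butterfly loop is empty because $\lfloor m/2\rfloor = 0$, and we reach line~\ref{line:tfthalt} with $N_0 = A_0 = \hat A_0$ unchanged. For $\ell \geq 2$ I would put $E = \even(N)$ and $O = \odd(N)$, noting $\len(E) = \lceil \ell/2\rceil$ and $\len(O) = \lfloor\ell/2\rfloor$ are both smaller than $\ell$. Writing $A(x) = G(x^2) + xH(x^2)$ as in the butterfly relations \eqref{eq:butterfly}, the indexing $S_i = X_{q+i2^r}$ shows that $E$ initially holds the coefficients of $G$ and $O$ those of $H$.

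Since $\fl(N) = \fl(E)$, the inductive hypothesis applies to $E$ and returns control with $S = E$, with $E$ holding $\hat G$, and with nothing outside $E$ disturbed. I would then read off the loop body: line~\ref{line:tfthalt} does not halt (as $E \neq (0,0)$), so $\textit{prev}\assign E$ and $S \assign \parent(E) = N$. Re-entering with $\textit{prev} = \even(N)$ selects the second branch, which after the correction discussed below sets $\textit{prev}\assign N$ and $S \assign \fl(\odd(N))$. Because the even subtree only wrote to even-indexed entries of $N$, the array $O$ still holds the coefficients of $H$, so the inductive hypothesis applies a second time and returns $S = O$ with $O$ holding $\hat H$. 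One more pass of the loop body brings us back with $\textit{prev} = \odd(N)$, now selecting the first branch, whose butterfly loop over $0 \leq i < \lfloor\ell/2\rfloor$ is exactly \eqref{eq:butterfly} and fills in $\hat A_{2i}, \hat A_{2i+1}$, leaving $S = N$ at line~\ref{line:tfthalt}.

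The step I expect to be the crux is the correction on lines~\ref{line:tftv}--\ref{line:tftcorrect}, which fires exactly when $\ell$ is odd. Then $\len(E) = \len(O)+1$, so the butterfly loop, which can only pair equal-index outputs of $\hat G$ and $\hat H$, never produces the last entry $\hat A_{\ell-1}$; by \eqref{eq:butterfly} that entry equals $\hat G_{(\ell-1)/2} + \omega_{\ell-1}\hat H_{(\ell-1)/2}$, but $\hat H_{(\ell-1)/2}$ is an evaluation of $H$ lying beyond the truncated transform of $O$. The key point I would verify is that the correction runs \emph{before} the descent into $O$, while $O$ still holds the raw coefficients of $H$; a short calculation with $m = \ell$ then identifies $v$ with $H(\omega_{(\ell-1)/2}) = \hat H_{(\ell-1)/2}$ and $\omega_{m-1}$ with $\omega_{\ell-1}$, so the update turns $N_{\ell-1} = \hat G_{(\ell-1)/2}$ into $\hat A_{\ell-1}$. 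Since $\ell-1$ is even this write lands in $E$, leaving $O$ intact for the recursive call, and the final ``no other entries affected'' clause follows by noting that every write made belongs to $E$, to $O$, or to the butterflies and correction at $N$---all of which are entries of $N$.
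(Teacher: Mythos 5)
Your proof is correct and follows essentially the same route as the paper's: induction on $\ell$, applying the inductive hypothesis to the even subtree, performing the odd-length correction for $\hat A_{\ell-1}$ before descending (while $O$ still holds the raw coefficients of $H$), then applying the hypothesis to the odd subtree and finishing with the butterflies. Your additional observations --- that the correction's write to $N_{\ell-1}$ lands in $E$ and so leaves $O$ intact, and that the ``no other entries affected'' clause follows from the inductive hypothesis --- are accurate refinements of the paper's argument.
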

\begin{proof}
The proof is by induction on $\ell$. If $\ell=1$, then $\il(N)$ is
true and $\hat{A}_0 = A_0$ so we are done. So assume 
$\ell>1$ and that the lemma holds for all shorter lengths. 

Decompose $A$ as $A(x) = G(x^2) + xH(x^2)$. Since 
$S = \fl(\even(N))$ as well, the induction hypothesis guarantees that
the even-indexed elements of $N$, corresponding to the coefficients of
$G$, will be transformed into $\hat{G}$, 
and we will have $S=\even(N)$ before line~\ref{line:tfthalt}. The following
lines set $prev=\even(N)$ and $S=N$, so that
lines~\ref{line:tftodd}--\ref{line:tftgodown} are executed on the next
iteration. 

If $\ell$ is odd, then
$(\ell-1)/2 \geq \len(\odd(N))$, so $\hat{H}_{(\ell-1)/2}$ will not be
computed in the odd subtree, and we will not be able to apply
\eqref{eq:butterfly} to compute
$\hat{A}_{\ell-1} = \hat{G}_{(\ell-1)/2} + \omega_{\ell-1}\hat{H}_{(\ell-1)/2}$.
This is why, in this case, we explicitly 
compute $$v = H(\omega_{(\ell-1)/2}) = \hat{H}_{(\ell-1)/2}$$ 
on line~\ref{line:tftv}, and then
compute $\hat{A}_{\ell-1}$ directly on line~\ref{line:tftcorrect}, before
descending into the odd subtree.

Another application of the induction hypothesis guarantees that we will
return to line~\ref{line:tfthalt} with $S=\odd(N)$ after computing $N_{2i+1} = \hat H_i$ for $0 \leq i < \lfloor \ell/2 \rfloor$.
The following lines set $prev=\odd(N)$ and $S=N$, and we arrive at
line~\ref{line:tftbfly1} on the next iteration.
The \textbf{for} loop
thus properly applies the butterfly relations \eqref{eq:butterfly} to
compute $\hat{A}_i$ for $0 \leq i < 2\lfloor \ell/2 \rfloor$,
which completes the proof.
\end{proof}

Now we are ready for the main result of this section.
\begin{prop}
\label{prop:tft}
Algorithm \ref{algo:tft} correctly computes $\hat F_i$ for $0 \leq i < n$.
It performs $O(n \log n)$ ring and pointer operations, 
and uses $O(1)$ auxiliary space.
\end{prop}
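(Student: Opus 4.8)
The plan is to obtain correctness as a one-line corollary of Lemma~\ref{lem:tft} and then to spend the bulk of the argument on the resource bounds. For correctness, note that the algorithm begins with $S = \fl(0,0)$ and $\textit{prev} = \textrm{null}$, at which moment $X_i = F_i$ for $0 \le i < n$. Since the root $(0,0)$ has $\len(0,0) = n$, I would apply Lemma~\ref{lem:tft} with $N = (0,0)$, $\ell = n$, and $A = F$: it yields that after finitely many iterations control reaches the point just before line~\ref{line:tfthalt} with $S = (0,0)$ and $X_i = \hat F_i$ for every $i$, whereupon the test $S = (0,0)$ fires and the algorithm halts with the transform in place. Thus the only things to check for correctness are that the starting configuration matches the lemma's hypotheses and that the halting line is reached.

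For the space bound, I would simply inventory the live variables: the two nodes $S$ and $\textit{prev}$ (each a pair of integer pointers), the length $m$ and the loop counter $i$, and the ring elements $\theta$ and $v$. Each node coordinate lies in $[0,n)$ and so fits in a single pointer under the stated model, and crucially the roots $\omega_{2j}$, $\omega_{(m-1)/2}$ and $\omega_{m-1}$ are regenerated on the fly rather than tabulated. Hence the auxiliary storage is $O(1)$.

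For the time bound, I would first pin down the shape of the tree: each internal node of length $\ell>1$ has children of lengths $\lceil \ell/2\rceil$ and $\lfloor \ell/2\rfloor$, both positive, and each index is isolated in exactly one leaf, so the tree has $n$ leaves, $2n-1$ nodes, and depth $\lceil \lg n\rceil$, with the node lengths at each fixed level summing to $n$. Next I would count while-loop passes: the first branch runs once per leaf and once per internal node (ascending from its odd child), and the second branch once per internal node (ascending from its even child), for $O(n)$ iterations in all. The ring-operation cost is charged per node --- the butterfly loop on lines~\ref{line:tftbfly1}--\ref{line:tftbfly2} and the correction on lines~\ref{line:tftv}--\ref{line:tftcorrect} each cost $O(\len)$ --- so summing over nodes and levels gives $\sum_{\text{nodes}} O(\len) = O(n\log n)$. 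For the pointer cost I would argue that the successive $\fl$ descents traverse each tree edge downward exactly once (each odd edge when the corresponding second branch descends via line~\ref{line:tftgodown}, each even edge within a unique leftmost descent), and each $\parent$ step climbs one edge, so all traversal together is only $O(n)$.

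The step I expect to be the main obstacle is showing that on-the-fly root generation stays inside this budget. Every butterfly and every term of the sum defining $v$ draws a fresh root indexed in bit-reversed order, so I must verify that advancing the bit-reversed index and the associated power of $\omega_{[k]}$ costs only amortized $O(1)$ per root consumed --- leaning on the observation in the preliminaries that a bit-reversed counter increments in amortized constant time --- so that the total root-generation work is absorbed into the $O(n\log n)$ already accounted for. The rest is routine bookkeeping: confirming that $\len$, $\parent$, $\even$ and $\odd$ are each $O(1)$ pointer operations and that no edge is charged twice.
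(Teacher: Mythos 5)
Your proposal is correct and follows essentially the same route as the paper: correctness as an immediate consequence of Lemma~\ref{lem:tft} applied to the root $(0,0)$, the space bound by inspecting the constant-size local variables, and the time bound by observing that the \textbf{while} loop visits each leaf once and each internal node twice. The only difference is a minor accounting choice --- you amortize the $\fl$ descents over tree edges, whereas the paper simply charges each of the $O(n)$ iterations a cost of $O(\len(S) + \log n)$ --- and both yield $O(n\log n)$.
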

\begin{proof}
The correctness follows immediately from Lemma \ref{lem:tft}, 
since we start with $S=\fl(0,0)$, which is the first leaf of the whole
tree. The space bound is immediate since each variable
has constant size.

To verify the time bound, notice that the {\bf while} loop visits each leaf node once and each non-leaf node twice (once with $\textit{prev} = \even(S)$ and once with $\textit{prev} = \odd(S)$). Since always $q < 2^r < 2n$, there are $O(n)$ iterations through the {\bf while} loop, each of which has cost 
$O(\len(S) + \log n)$. This gives the total cost of $O(n\log n)$.
\end{proof}

\section{Space-restricted ITFT}
\label{sec:itft}

Next we describe an in-place inverse TFT algorithm that uses $O(1)$
auxiliary space (Algorithm \ref{algo:itft}). It takes as input $\hat F_0, \ldots, \hat F_{n-1}$ for some polynomial $F \in R[x]$, $\deg F < n$, and overwrites the buffer with $F_0, \ldots, F_{n-1}$.

The path of the algorithm is exactly the reverse of
Algorithm~\ref{algo:tft}, and we use the same notation as before to move
through the tree. We only require one additional function:
\setlength{\multlinegap}{0pt}
\begin{multline*}
\rp(S) = \\
\begin{cases}
S, & S = \odd(\parent(S)), \\
\rp(\parent(S)), & \text{otherwise}.
\end{cases}
\end{multline*}
If \[ \fl(\odd(N_1)) = N_2, \] then
\[ \parent(\rp(N_2)) = N_1, \] so $\rp$ computes the inverse of the assignment on line~\ref{line:tftgodown} in
Algorithm~\ref{algo:tft}.

\begin{algorithm}[ht]
\DontPrintSemicolon
\KwIn{$X_i = \hat F_i$ for $0 \leq i < n$, where $F \in R[x]$, $\deg F < n$}
\KwOut{$X_i = F_i$ for $0 \leq i < n$}
\medskip
$S \assign (0,0)$\;
\While{$S \neq \fl(0,0)$\label{line:itft-while}}{
  \If{$\il(S)$}{
    $S \assign \parent(\rp(S))$\;
    $m \assign \len(S)$\;
    \If{$\len(S) \equiv 1 \bmod 2$ \label{line:itftodd}}{
      $v \gets \sum_{i=0}^{(m-3)/2} 
       S_{2i+1}\cdot \omega_{(m-1)/2}^i$\label{line:itftv}\;
      $S_{m-1} \gets S_{m-1} - v \cdot \omega_{m-1}$
      \label{line:itftcorrect}
    }
    $S \assign \even(S)$\;
  }
  \Else{
    $m \assign \len(S)$\;
    \For{$(i,\theta) \in \{(j,\omega_{2j}^{-1}):0\leq j < \lfloor m/2\rfloor\}$
    \label{line:itftbfly1}}{
      $\left[\begin{array}{c}S_{2i}\\S_{2i+1}\end{array}\right] \assign
       \left[\begin{array}{c}(S_{2i} + S_{2i+1})/2 \\
       \theta\cdot(S_{2i} - S_{2i+1})/2\end{array}\right]$\;
       \label{line:itftbfly2}
    }
    $S \assign \odd(S)$\label{line:itftlast}\;
  }
}
\caption{$\ITFT(\sqb{X_0, \ldots, X_{n-1}})$}
\label{algo:itft}
\end{algorithm}

We leave it to the reader to confirm that the structure of the recursion is
identical to that of Algorithm~\ref{algo:tft}, but in reverse, from which the following analogues of Lemma~\ref{lem:tft} and Proposition~\ref{prop:tft} follow immediately:

\begin{lem}\label{lem:itft}
Let $N$ be a node with $\len(N)=\ell$, 
and \[ A(x) = \sum_{0\leq i<\ell} A_i x^i \in R[x]. \]
If $S=N$ and $N_i=\hat{A}_i$ for $0 \leq i < \ell$
before some iteration of line~\ref{line:itft-while} in
Algorithm~\ref{algo:itft}, then after a finite number of steps, 
we will have $S=\fl(N)$ and $N_i = A_i$ for $0 \leq i < \ell$
before some iteration of line~\ref{line:itft-while}.
No other array entries in $X$ are affected.
\end{lem}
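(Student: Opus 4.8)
The plan is to prove Lemma~\ref{lem:itft} by mirroring the structure of the proof of Lemma~\ref{lem:tft}, using induction on $\ell = \len(N)$, and exploiting that Algorithm~\ref{algo:itft} inverts each arithmetic and navigational step of Algorithm~\ref{algo:tft}. First I would handle the base case $\ell=1$: here $\il(N)$ is true, so $N=\fl(N)$, and since $\hat{A}_0 = A_0$, the claim holds with no arithmetic performed (the algorithm immediately moves to $\parent(\rp(S))$). For the inductive step, assume $\ell>1$ and that the lemma holds for all shorter lengths, and decompose $A(x)=G(x^2)+xH(x^2)$ with $\deg G<\lceil\ell/2\rceil$ and $\deg H<\lfloor\ell/2\rfloor$.

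Next I would trace one pass through the \textbf{while} loop starting from $S=N$ with $N_i=\hat{A}_i$. Since $N$ is not a leaf, control enters the \textbf{else} branch, and the \textbf{for} loop on line~\ref{line:itftbfly1} applies the inverse butterflies $S_{2i}\assign(S_{2i}+S_{2i+1})/2$ and $S_{2i+1}\assign\omega_{2i}^{-1}(S_{2i}-S_{2i+1})/2$ for $0\le i<\lfloor\ell/2\rfloor$. By the butterfly relations~\eqref{eq:butterfly}, solving for $\hat{G}_i$ and $\hat{H}_i$ gives exactly $\hat{G}_i=(\hat{A}_{2i}+\hat{A}_{2i+1})/2$ and $\hat{H}_i=\omega_{2i}^{-1}(\hat{A}_{2i}-\hat{A}_{2i+1})/2$, so after the loop the even-indexed entries hold $\hat{G}_i$ and the odd-indexed entries hold $\hat{H}_i$, for $0\le i<\lfloor\ell/2\rfloor$. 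Here the fact that $2$ is not a zero-divisor is what makes the division by $2$ well-defined. The algorithm then sets $S\assign\odd(N)$ on line~\ref{line:itftlast}, entering the odd subtree. Applying the induction hypothesis to the node $\odd(N)$ (of length $\lfloor\ell/2\rfloor$) shows that the odd coefficients are inverted into $H_i$ and that $S$ reaches $\fl(\odd(N))$, affecting no other entries.

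I would then address the odd-length correction, which is the analogue of lines~\ref{line:tftv}--\ref{line:tftcorrect} and is the main subtlety. When $\ell$ is odd, $\hat{H}_{(\ell-1)/2}$ was never stored in the odd subtree (since $(\ell-1)/2\ge\len(\odd(N))$), so the inverse butterfly for index $i=(\ell-1)/2$ could not have recovered $\hat{G}_{(\ell-1)/2}$ from $\hat{A}_{\ell-1}$. The forward algorithm compensated by adding $v\cdot\omega_{\ell-1}$ to obtain $\hat{A}_{\ell-1}$; correspondingly, the inverse algorithm must \emph{subtract} it back. This is precisely what lines~\ref{line:itftodd}--\ref{line:itftcorrect} do: upon reaching the leaf and ascending via $\parent(\rp(S))$ back to $N$, the algorithm recomputes $v=H(\omega_{(\ell-1)/2})=\hat{H}_{(\ell-1)/2}$ from the now-inverted odd coefficients (the $H_i$), which are available since $H$ has been recovered, and sets $S_{\ell-1}\assign S_{\ell-1}-v\cdot\omega_{\ell-1}$, restoring $\hat{G}_{(\ell-1)/2}$ in position $\ell-1$. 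I would emphasize that the identity $\fl(\odd(N_1))=N_2\Rightarrow\parent(\rp(N_2))=N_1$, stated before the algorithm, guarantees that ascending from the leaf lands back exactly at $N$, so the correction is applied at the right node. The hard part is verifying that $v$ is computed from values that are genuinely present at this moment: the sum in line~\ref{line:itftv} reads $S_{2i+1}=H_i$ (already inverted), and this is valid because the odd subtree has been fully processed before the ascent triggers the correction.

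Finally I would close the induction: after the correction, the even-indexed entries hold $\hat{G}_i$ for all $0\le i<\lceil\ell/2\rceil$, so applying the induction hypothesis to the node $\even(N)=\mathrm{Even}(N)$ of length $\lceil\ell/2\rceil$ inverts them into $G_i$ and drives $S$ to $\fl(\even(N))=\fl(N)$, which is where control pauses at line~\ref{line:itft-while}. Interleaving the even and odd coefficients, the buffer now holds $N_i=A_i$ for $0\le i<\ell$, with no other entries of $X$ touched, completing the proof. I would then remark that the corresponding proposition (correctness, $O(n\log n)$ operations, and $O(1)$ auxiliary space for the full ITFT) follows verbatim from Lemma~\ref{lem:itft} exactly as Proposition~\ref{prop:tft} followed from Lemma~\ref{lem:tft}, since the traversal visits each leaf once and each internal node twice, with per-iteration cost $O(\len(S)+\log n)$.
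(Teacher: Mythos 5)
Your proof is correct and follows exactly the route the paper intends: the paper leaves Lemma~\ref{lem:itft} to the reader as the mirror image of Lemma~\ref{lem:tft}, and you carry out that induction in detail, correctly identifying that the odd subtree is processed first, that the inverse butterflies recover $\hat G_i$ and $\hat H_i$ from \eqref{eq:butterfly}, that $\parent(\rp(\cdot))$ returns control to $N$ for the correction, and that the correction subtracts $\hat H_{(\ell-1)/2}\cdot\omega_{\ell-1}$ from the still-untouched entry $S_{\ell-1}=\hat A_{\ell-1}$ using the already-recovered coefficients $H_i$. No gaps; this is the argument the paper asserts "follows immediately."
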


\begin{prop}
\label{prop:itft}
Algorithm \ref{algo:itft} correctly computes $F_i$ for $0 \leq i < n$. It
performs $O(n \log n)$ ring and pointer operations, and uses $O(1)$ auxiliary space.
\end{prop}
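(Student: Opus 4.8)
The plan is to establish Proposition~\ref{prop:itft} by mirroring the proof of Proposition~\ref{prop:tft} almost verbatim, since the paper has set up Algorithm~\ref{algo:itft} to traverse exactly the same recursion tree as Algorithm~\ref{algo:tft} but in the reverse direction, and has already stated the inversion-of-paths fact $\parent(\rp(N_2)) = N_1$ whenever $\fl(\odd(N_1)) = N_2$. Correctness will follow immediately from Lemma~\ref{lem:itft} applied to the root node $N = (0,0)$: we start the algorithm with $S = (0,0)$ and $X_i = \hat F_i$, so the lemma guarantees that after finitely many steps we reach $S = \fl(0,0)$ with $X_i = F_i$, which is precisely the loop-termination condition on line~\ref{line:itft-while}. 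The $O(1)$ space bound is again immediate, since every variable ($S$, $m$, $v$, the loop counters) holds a single ring element or a single pointer of bounded magnitude.

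For the time bound I would reuse the counting argument from Proposition~\ref{prop:tft}. The \textbf{while} loop again visits each leaf node exactly once and each non-leaf node exactly twice --- once entering from below via $\rp$ (the branch guarded by $\il(S)$, which applies the odd-length correction and then descends to $\even(S)$) and once processing the butterflies before descending to $\odd(S)$ on line~\ref{line:itftlast}. Since the nodes and the traversal are identical to those of Algorithm~\ref{algo:tft} up to reversal, there are still $O(n)$ iterations, because at every node we have $q < 2^r < 2n$. Each iteration costs $O(\len(S) + \log n)$: the butterfly loop on lines~\ref{line:itftbfly1}--\ref{line:itftbfly2} contributes $O(\len(S))$ ring operations, the odd-length correction on lines~\ref{line:itftv}--\ref{line:itftcorrect} contributes $O(\len(S))$, and the pointer manipulations $\parent$, $\rp$, $\even$, $\odd$, $\fl$ together with the on-the-fly computation of the roots $\omega_{2j}^{-1}$ in bit-reversed order each cost $O(\log n)$. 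Summing $\sum_S \len(S)$ over the tree gives $O(n \log n)$, as each of the $O(\log n)$ levels contributes total subarray length $O(n)$.

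The one point requiring genuine care is the cost of $\rp$, since unlike $\parent$ (a single level) it may climb several levels up the tree in a single call. Naively this looks like it could inflate the per-iteration cost, but I would argue that the total number of $\parent$-steps taken inside all $\rp$ calls over the whole run is $O(n)$: each such step corresponds to traversing one tree edge, and $\rp$ exactly retraces (in reverse) the edges that $\fl(\odd(\cdot))$ descended on line~\ref{line:tftgodown} of the forward algorithm, so the edges are charged once each, giving $O(n)$ amortized total. This is the analogue of the amortized-constant-time claim for bit-reversed counter increments noted in the Preliminaries. Hence the $\rp$ cost is absorbed into the existing $O(n \log n)$ bound, and the proposition follows from the same argument as Proposition~\ref{prop:tft}. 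I expect this amortization of $\rp$ to be the only step where the reverse algorithm is not a literal transcription of the forward one, and therefore the main thing worth spelling out.
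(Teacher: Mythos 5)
Your proposal is correct and matches the paper's approach exactly: the paper gives no separate proof of Proposition~\ref{prop:itft}, asserting only that it follows from the reversed recursion structure by the same argument as Proposition~\ref{prop:tft}, which is precisely the argument you have written out (correctness from Lemma~\ref{lem:itft} at the root, $O(n)$ iterations of cost $O(\len(S)+\log n)$, constant-size variables). Your amortization of the $\rp$ calls, while valid, is not actually needed --- each call climbs at most $O(\log n)$ levels, so it is already absorbed by the $O(\len(S)+\log n)$ per-iteration bound.
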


The fact that our $\TFT$ and
$\ITFT$ algorithms are essentially reverses of each other is an
interesting feature not shared by the original formulations in
\cite{vdh-1}.

\section{Polynomial multiplication}

We now describe the multiplication algorithm alluded to in the introduction. The strategy is similar to that of \cite{roche}, with a slightly more complicated ``folding'' step. The input consists of two polynomials $A, B \in R[x]$ with $\deg A < n$ and $\deg B < m$. The routine is supplied an output buffer $X$ of length $r = n + m - 1$ in which to write the product $C = AB$.

The subroutine $\FFT$ has the same interface as $\TFT$, but is only called for power-of-two length inputs.

\begin{algorithm}
\DontPrintSemicolon
\KwIn{$A, B \in R[x]$, $\deg A < m$, $\deg B < n$}
\KwOut{$X_s = C_s$ for $0 \leq s < n + m - 1$, where $C = AB$}
\medskip
$r \assign n + m - 1$\;
$q \assign 0$\;
\While{$q < r - 1$}{
  $\ell \assign \lfloor \lg(r-q) \rfloor - 1$\;
  $L \assign 2^\ell$\;
  $[X_{q},X_{q+1},\ldots,X_{q+2L-1}] \assign [0,0,\ldots,0]$
  \nllabel{line:fold-0}\;
  \For{$0 \leq i < m$\nllabel{line:fold-1}}{$X_{q + (i \bmod L)} \assign X_{q + (i \bmod L)} + \omega_q^i A_i$\nllabel{line:fold-1b}}
  $\FFT([X_q, X_{q+1}, \ldots, X_{q + L - 1}])$\nllabel{line:fft-1}\;
  \For{$0 \leq i < n$\nllabel{line:fold-2}}{$X_{q + L + (i \bmod L)} \assign X_{q + L + (i \bmod L)} + \omega_q^i B_i$\nllabel{line:fold-2b}}
  $\FFT([X_{q+L}, X_{q+L+1}, \ldots, X_{q + 2L - 1}])$\;
  \For{$0 \leq i < L$\nllabel{line:mult}}{$X_{q+i} \assign X_{q+i} X_{q+L+i}$\nllabel{line:multb}}
  $q \assign q + L$\;
}
$X_{r-1} \assign A(\omega_{r-1}) B(\omega_{r-1})$\nllabel{line:mult-last}\;
$\ITFT([X_0, \ldots, X_{r-1}])$\nllabel{line:mult-itft}\;
\caption{Space-restricted product}
\label{algo:multiply}
\end{algorithm}

\begin{prop}
Algorithm \ref{algo:multiply} correctly computes the product $C = AB$, in time
$O((m+n) \log (m+n))$ and using $O(1)$ auxiliary space.
\end{prop}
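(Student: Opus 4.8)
The plan is to show that when the main \textbf{while} loop terminates, together with line~\ref{line:mult-last}, the buffer holds the truncated Fourier transform of the product, $X_s = \hat{C}_s$ for $0 \le s < r$; the call to $\ITFT$ on line~\ref{line:mult-itft} then recovers the coefficients $C_s$ by Proposition~\ref{prop:itft}. Since $\deg A < m$ and $\deg B < n$ give $\deg C \le (m-1)+(n-1) = r-1$, the product $C$ has at most $r$ coefficients, so this is exactly the input format expected by the $\ITFT$. The task thus reduces to analysing a single pass of the loop body on a block $[q, q+L)$.

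First I would establish the invariant that, at the top of each iteration, $L = 2^\ell \divides q$, where $\ell = \lfloor \lg(r-q)\rfloor - 1$. This follows by induction: $q=0$ initially, and because shrinking the argument cannot increase a floor-logarithm, the exponent $\ell'$ computed at the next block satisfies $\ell' \le \ell$; hence the new length $L'$ divides $L$, and since $L \divides q$ we get $L' \divides q$ and $L' \divides (q+L)$. The same bound $L \le (r-q)/2$ shows $q+L \le r-1$, so $q$ increases strictly but never exceeds $r-1$; the loop therefore terminates with $q = r-1$, and the processed blocks $[q,q+L)$ partition $\{0,\dots,r-2\}$, leaving only the index $r-1$ for line~\ref{line:mult-last}. (Since $\ell$ is non-increasing and takes each value at most twice, there are $O(\log r)$ blocks.)

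The step I expect to be hardest is verifying that the folding loops followed by the length-$L$ $\FFT$ compute the correct evaluations. Lines~\ref{line:fold-1}--\ref{line:fold-1b} place in $[X_q,\dots,X_{q+L-1}]$ the coefficients of $P_A(y) = \sum_{j} \bigl(\sum_{i \equiv j \,(L)} \omega_q^i A_i\bigr) y^j$, and because each $\omega_t$ with $0 \le t < L$ is an $L$-th root of unity, $\FFT$ overwrites $X_{q+t}$ with $P_A(\omega_t) = \sum_i A_i (\omega_q\omega_t)^i = A(\omega_q\omega_t)$. The identity I must check is $\omega_q\omega_t = \omega_{q+t}$ for $0 \le t < L$, valid precisely because $L \divides q$: writing $q = a\,2^\ell$ and reversing the $k$-bit representation gives $\rev_k(q+t) = 2^{k-\ell}\rev_\ell(t) + \rev_{k-\ell}(a)$, whence $\omega_{q+t} = \omega_{[\ell]}^{\rev_\ell t}\,\omega_q = \omega_t\,\omega_q$. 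Thus after the first $\FFT$ we have $X_{q+t} = A(\omega_{q+t}) = \hat{A}_{q+t}$, and symmetrically $X_{q+L+t} = \hat{B}_{q+t}$ after the second; the pointwise products on lines~\ref{line:mult}--\ref{line:multb} then yield $X_{q+t} = \hat{A}_{q+t}\hat{B}_{q+t} = C(\omega_{q+t}) = \hat{C}_{q+t}$ for $0 \le t < L$.

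It remains to confirm these outputs are preserved and to account for resources. Because the next block begins at $q+L$ and only zeros, reads, or writes cells of index $\ge q+L$ (with $q+2L \le r$ following from $2L \le r-q$), the values $\hat{C}_q,\dots,\hat{C}_{q+L-1}$ are never disturbed, while the scratch entries in $[q+L,q+2L)$ are harmlessly overwritten. After the loop and line~\ref{line:mult-last} the buffer holds $\hat{C}_0,\dots,\hat{C}_{r-1}$, and $\ITFT$ completes the computation. For the cost: the folding loops contribute $O((m+n)\log r)$ over the $O(\log r)$ blocks, the $\FFT$ calls contribute $\sum_{\text{blocks}} O(L\log L) = O(r\log r)$ since $\sum L \le r$, the final $\ITFT$ is $O(r\log r)$ by Proposition~\ref{prop:itft}, and $r = m+n-1$. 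The powers $\omega_q^i$ are generated by repeated multiplication, with $\omega_q$ recomputed per block by bit-reversed indexing as in Section~\ref{sec:tft}, all within $O(1)$ auxiliary space. This gives the claimed $O((m+n)\log(m+n))$ time and $O(1)$ space.
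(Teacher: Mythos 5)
Your proof is correct and follows essentially the same route as the paper's: the same block partition of $[0,r-1)$ with the invariant $L \divides q$, the same folding-plus-$\FFT$ argument hinging on the identity $\omega_q\omega_t = \omega_{q+t}$, and the same cost accounting ending with Proposition~\ref{prop:itft}. You supply some details the paper leaves implicit (notably the bit-reversal verification of that identity and the induction for $L \divides q$), but the structure is identical.
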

\begin{proof}
The main loop terminates since $q$ is strictly increasing. Let $N$ be the number of iterations, and let $q_0 > q_1 > \cdots > q_{N-1}$ and $L_0 \geq L_1 \geq \cdots \geq L_{N-1}$ be the values of $q$ and $L$ on each iteration. By construction, the intervals $[q_i, q_i + L_i)$ form a partition of $[0, r - 1)$, and $L_i$ is the largest power of two such that $q_i + 2L_i \leq r$. Therefore each $L$ can appear at most twice (i.e.~if $L_i = L_{i-1}$ then $L_{i+1} < L_i$), $N \leq 2 \lg r$, and we have $L_i \divides q_i$ for each $i$.

At each iteration, lines \ref{line:fold-1}--\ref{line:fold-1b} compute the coefficients of the polynomial $A(\omega_q x) \bmod x^L - 1$, placing the result in $[X_q, \ldots, X_{q+L-1}]$. Line \ref{line:fft-1} then computes $X_{q+i} = A(\omega_q \omega_i)$ for $0 \leq i < L$. Since $L \divides q$ we have $\omega_q \omega_i = \omega_{q+i}$, and so we have actually computed $X_{q+i} = \hat A_{q+i}$ for $0 \leq i < L$. The next two lines similarly compute $X_{q+L+i} = \hat B_{q+i}$ for $0 \leq i < L$. (The point of the condition $q + 2L \leq r$ is to ensure that both of these transforms fit into the output buffer.) Lines \ref{line:mult}--\ref{line:multb} then compute $X_{q+i} = \hat A_{q+i} \hat B_{q+i} = \hat C_{q+i}$ for $0 \leq i < L$.

After line \ref{line:mult-last} we finally have $X_s = \hat C_i$ for all $0 \leq i < r$. (The last product was handled separately since the output buffer does not have room for the two Fourier coefficients.) Line \ref{line:mult-itft} then recovers $C_0, \ldots, C_{r-1}$.

We now analyze the time and space complexity. The loops on lines \ref{line:fold-0}, \ref{line:fold-1}, \ref{line:fold-2} and \ref{line:mult} contribute $O(r)$ operations per iteration, or $O(r \log r)$ in total, since $N = O(\log r)$. The FFT calls contribute $O(L_i \log L_i)$ per iteration, for a total of $O(\sum_i L_i \log L_i) = O(\sum_i L_i \log L) = O(r \log r)$. Line \ref{line:mult-last} contribute $O(r)$, and line \ref{line:mult-itft} contributes $O(r \log r)$ by Proposition \ref{prop:itft}. The space requirements are immediate also by Proposition \ref{prop:itft}, since the main loop requires only $O(1)$ space.
\end{proof}

\section{Conclusion}

We have demonstrated that forward and inverse
radix-2 truncated Fourier transforms can be
computed in-place using $O(n \log n)$ time and $O(1)$ auxiliary storage.
As a result, polynomials with degrees less than $n$ can be multiplied
out-of-place within the same time and space bounds. These results apply
to any size $n$, whenever 
the underlying ring 
admits division by 2 and a primitive root of unity of order
$2^{\lceil \lg n \rceil}$.

Numerous questions remain open in this direction. First, our in-place
TFT and ITFT algorithms avoid using auxiliary space at the cost of some extra arithmetic. So although
the asymptotic complexity is still $O(n\log n)$, the implied constant
will be greater than for the usual TFT or FFT algorithms. It would be
interesting to know whether this extra cost is unavoidable. In
any case, the implied constant would need to be reduced as much as
possible for the in-place TFT/ITFT to compete with the running
time of the original algorithms.

We also have not yet demonstrated an in-place multi-dimensional TFT or ITFT
algorithm. In one dimension, the ordinary TFT can hope to gain at most a factor of two over the FFT, but a $d$-dimensional TFT can be faster than the corresponding FFT by a factor of $2^d$, as demonstrated in \cite{LiMazSch09}. An in-place variant along the lines of the algorithms presented in this paper could save a factor of $2^d$ in both time and memory, with practical consequences for multivariate polynomial arithmetic.

Finally, noticing that our multiplication algorithm, despite using only
$O(1)$ auxiliary storage, is still an out-of-place algorithm, we
restate an open question of \cite{roche}: Is it possible,
under any time restrictions, to perform multiplication in-place and
using only $O(1)$ auxiliary storage? The answer seems to be no, but a
proof is as yet elusive.

\bibliographystyle{hplain}
\bibliography{inplace-tft}

\end{document}